\documentclass[10pt, conference, letterpaper]{IEEEtran}
\IEEEoverridecommandlockouts

\def\BibTeX{{\rm B\kern-.05em{\sc i\kern-.025em b}\kern-.08em
    T\kern-.1667em\lower.7ex\hbox{E}\kern-.125emX}}
\ifCLASSINFOpdf
\fi
\usepackage{verbatim}
\usepackage{algorithm}
\usepackage{graphicx}
\usepackage{cite}
\usepackage{bm}
\usepackage{amssymb,amsfonts}
\usepackage{algorithmicx}
\usepackage[noend]{algpseudocode}
\usepackage{setspace}
\usepackage{float}
\usepackage{subfigure}
\usepackage{epstopdf}
\usepackage{amsthm}
\newtheorem{lemma}{Lemma}

\newtheorem{theorem}{Theorem}

\usepackage{stfloats}
\usepackage{enumerate}
\usepackage{multirow}
\usepackage{latexsym}
\usepackage{multicol,lipsum}
\usepackage{cuted} 

\usepackage{tikz}
\usepackage{textcomp}
\usepackage{xcolor}
\usepackage{cancel}

\pdfoutput=1
\usepackage[cmex10]{amsmath}
\usepackage[cmex10]{mathtools}

\algdef{SE}[DOWHILE]{Do}{doWhile}{\algorithmicdo}[1]{\algorithmicwhile\ #1}%
\allowdisplaybreaks
\makeatletter 
\def\@eqnnum{{\normalsize \normalcolor (\theequation)}} 

\makeatother 
\begin{document}
\title{Pinching Antenna-Assisted ISAC with Waveguide Mode Selection}
\author{Ruotong Zhao, Yijia Zhang, Shaokang Hu, and Derrick Wing Kwan Ng,~\IEEEmembership{Fellow,~IEEE} \\
\thanks{This work is supported by the HKUST-UNSW Global Research Impact Program \& Global Knowledge Network Awards and the Commonwealth through an Australian Government Research Training Program Scholarship [DOI: https://doi.org/10.82133/C42F-K220].}
\IEEEauthorblockA{School of Electrical Engineering and Telecommunications, University of New South Wales, Sydney, NSW 2052, Australia
\vspace{-3mm}
}}

\maketitle

\begin{abstract}
Conventional pinching antenna (PA)-assisted integrated sensing and communication (ISAC) architectures typically assume static receiver locations or predetermined receive waveguides, thereby underutilizing the inherent spatial degrees of freedom.
This paper proposes a novel mode-selectable PA-assisted ISAC framework to maximize the post-combining sensing signal-to-noise ratio while satisfying multi-user quality-of-service constraints by jointly optimizing the waveguide mode selection, transmit beamforming, and transmit/receive PA positions. To tackle the resulting mixed-integer nonconvex optimization problem, we develop a low-complexity block-coordinate descent algorithm that leverages a penalty-based majorization-minimization method to achieve high-quality suboptimal solutions. Numerical results demonstrate that the proposed design significantly outperforms both traditional PA and fixed-antenna benchmarks by synergistically harnessing spatial adaptability and modal reconfigurability. In particular, the mode-selectable design enables the coordinated optimization of transmit/receive operations and sensing–communication resource allocation, thereby maintaining sensing robustness under stringent communication requirements.
\end{abstract}
\vspace{-0.8em}
\begin{IEEEkeywords}
Optimization, resource allocation, Integrated sensing and communication.
\end{IEEEkeywords}

\section{Introduction}

Integrated sensing and communication (ISAC) enables simultaneous sensing and communication within a unified transceiver architecture, improving spectral efficiency and reducing deployment costs~\cite {xu2022robust}. However, the sensing performance is highly sensitive to propagation geometry and blockages due to severe round-trip path loss~\cite{liu2021cramer}, which motivates flexible-antenna architectures capable of dynamically reshaping the effective channel for reliable communication and accurate sensing~\cite{wu2023movable}.
Pinching-antenna (PA) technology offers a practical realization of such flexibility by activating radiating points at controllable locations, unlocking additional spatial degrees of freedom (DoF) to effectively mitigate path loss and enhance link quality~\cite{conference}. The PA system (PASS) is particularly beneficial for geometry-dependent round-trip ISAC, and the near-field operating regime of PASS yields richer spatial signatures for high-resolution sensing, e.g., joint range-and-angle inference,~\cite{guo2025learning}.
Nevertheless, PA-assisted ISAC entails resource-allocation complexity due to the PA-location-dependent coupling in path loss and in-waveguide phase shifts, along with the competition for limited waveguide resources, which naturally rise to an inherent sensing--communication trade-off.
Recent works have begun to explore PASS-assisted ISAC, including dedicated transmit/receive waveguide architectures and pinching-based beamforming optimization~\cite{11197530}, learning-enabled joint PA placement and beamforming design~\cite{qin2025joint, guo2025learning}, multi-waveguide designs with explicit sensing signal-to-noise ratio (SNR) constraints~\cite{mao2025multi}, and Cramér–Rao bound (CRB)-oriented localization-driven designs~\cite{li2025pinching}.

Although several recent works have advanced PA-assisted ISAC architectures~\cite{11197530, qin2025joint, mao2025multi, guo2025learning, li2025pinching}, most existing designs primarily focus on downlink transmission and assume a fixed sensing receiver (either co-located with the base station (BS) or attached to a pre-assigned receive waveguide). Such rigid receiver configurations limit sensing performance and system reconfigurability, and can ultimately underutilize scarce waveguide resources. Since the number and placement of waveguides are intrinsically constrained, waveguide assignment, i.e., how to configure each available waveguide for transmission or reception, becomes a critical yet largely overlooked design dimension.
Motivated by these identified research gaps and the need to effectively exploit the available spatial DoF, we investigate a PA‑assisted ISAC system in which each waveguide is configured for either downlink transmission or echo reception via waveguide‑mode selection. Under practical communication quality-of-service (QoS) constraints, we maximize the post-combining sensing SNR at the BS by jointly optimizing waveguide-mode assignments, transmit beamforming, and the transmit/receive PA positions.

The main contributions are summarized as:
i) We propose a mode-selectable waveguide-enabled PA-assisted ISAC framework capable of simultaneously serving multiple communication users and sensing a target by explicitly accounting for both the transmission and echo-reception processes.
ii) We formulate the design as a mixed-integer nonconvex resource-allocation problem and develop a computationally-efficient block-coordinate descent (BCD)-based algorithm that alternates between $(a)$ beamforming and waveguide-mode selection and (b) transmit/receive PA positioning; each block is handled via a penalty-based majorization-minimization (MM) procedure to address the remaining nonconvexities.
iii) Numerical results demonstrate that the proposed joint optimization of waveguide-mode selection and PA placement substantially enhances sensing performance while satisfying communication QoS requirements. This validates the effectiveness of exploiting the additional spatial DoF enabled by PA reconfigurability and adaptive waveguide utilization.

\textbf{Notation:} 
Boldface capital and lowercase letters denote matrices and vectors, respectively. $\textbf{A}^\top$, $\textbf{A}^H$, $\text{Tr}(\textbf{A})$, and $\text{Rank}(\textbf{A})$ represent the transpose, Hermitian transpose, trace, and rank of $\textbf{A}$, respectively. 
The operator $\mathrm{diag}(\textbf{a})$ forms a diagonal matrix from vector $\textbf{a}$, while $\mathrm{Diag}(\textbf{A})$ extracts the main diagonal of matrix $\textbf{A}$.
$\textbf{I}_N$, $\emptyset$,  $\mathbb{C}^{N \times M}$ ($\mathbb{R}^{N \times M}$), and $\mathbb{H}^{N}$ denote the identity matrix, an empty set, the spaces of $N \times M$ complex (real), and $N \times N$ Hermitian matrices, respectively. The largest eigenvalue of $\textbf{A}$ and its associated eigenvector are $\lambda_{\rm max}\hspace{-1mm}\left(\hspace{-0.2mm}\textbf{A}\hspace{-0.2mm}\right)$ and $\bm{\lambda}_{\rm max}\hspace{-1mm}\left(\hspace{-0.2mm}\textbf{A}\hspace{-0.2mm}\right)$, respectively. We utilize $|\cdot|$, $\mathbb{E}\{\cdot\}$, $\textbf{1}$, $j = \sqrt{-1}$,  $\Re(a)$, and $\Im(a)$ to represent the absolute value, statistical expectation, the all-one vector, the imaginary unit, the real part, and the imaginary part of the complex number $a$, respectively. The circularly symmetric complex Gaussian (CSCG) distribution is denoted by $\mathcal{CN}(\mu, \sigma^2)$, with mean $\mu$ and variance $\sigma^2$.

\section{System Model and Problem Statement}\label{sec:System}
\begin{figure}[t]
\centerline{\includegraphics[width=2.8in]{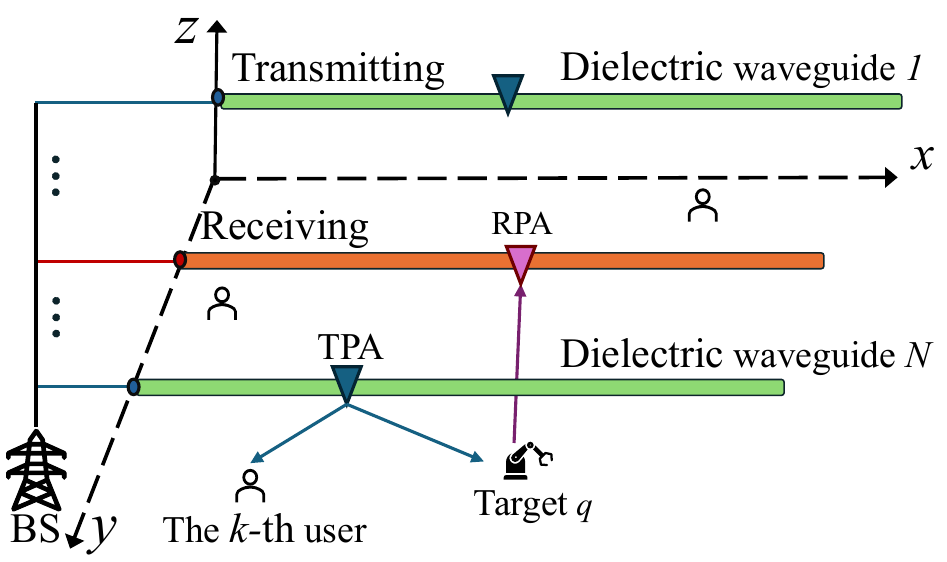}}
\caption{A PA-assisted ISAC system with multiple users and a single target.}
\vspace{-1.2em}
\label{fig:SYSTEM}
\end{figure}
As illustrated in Fig.~1, we consider a PA-assisted ISAC system, in which a dual-functional BS mounted at a height of $d$ meters above the ground that feeds $N$ dielectric waveguides of length $L$ meters. Each waveguide operates either in transmission mode or reception mode, modeled by a binary mode-selection vector $\bm{\tau}\triangleq[\tau_1,\ldots,\tau_N]^{\top}$ with $\tau_n\in\{0,1\}$, $\forall n \in \mathcal{N} \triangleq \{1, ..., N\}$. In particular, $\tau_n=1$ indicates that the $n$-th waveguide operates as a transmitting waveguide (TWG) with the corresponding index set $\mathcal{N}_{\rm t}\triangleq\{n\in\mathcal{N}\mid \tau_n=1\}$, while $\tau_n=0$ indicates that it is a receiving waveguide (RWG) with $\mathcal{N}_{\rm r}\triangleq\{n\in\mathcal{N}\mid \tau_n=0\}$, satisfying $\mathcal{N}_{\rm t}\cup\mathcal{N}_{\rm r}=\mathcal{N}$ and $\mathcal{N}_{\rm t}\cap\mathcal{N}_{\rm r}=\emptyset$. Moreover, we define the diagonal selection matrices for TWGs and RWGs as $\textbf{T}\triangleq\text{diag}(\bm{\tau})$ and $\textbf{R}\triangleq \text{diag}(1-\tau_1, ..., 1-\tau_N)$, respectively. Since the BS simultaneously serves $K$ single-antenna users, denoted as $\mathcal{K} \triangleq \{1, ..., K\}$, and performs radar sensing toward a single stationary point target under a narrowband LoS propagation model~\cite{mao2025multi}, we impose the constraint $K \leq \textbf{1}^{\top}\bm{\tau} \leq N-1$.

We assume that each waveguide can accommodate at most one transmitting pinching antenna (TPA) or one receiving pinching antenna (RPA)\footnote{We activate at most one PA per RWG to avoid intra-waveguide re-radiation coupling, which would otherwise yield a coupled and analytically intractable receive model~\cite{guo2025learning}. However, the extension to multiple simultaneously activated TPAs on a TWG is straightforward, as demonstrated in~\cite{conference, blockage}.}.
By adopting a three-dimensional (3D) Cartesian coordinate system, the positions of the feed point, TPA, and RPA of the $n$-th waveguide, $\forall n$, are denoted by
$\bm{\psi}_n^{\text{FP}} = \left[0,D_n,d\right]^{\top}$, $\bm{\psi}_{n}^{\rm TPA} = \left[x^{\rm TPA}_{n}, D_n, d\right]^{\top}$, and $\bm{\psi}_{n}^{\rm RPA} = \left[x^{\rm RPA}_{n}, D_n, d\right]^{\top}$, respectively, where $D_n$ is the coordinate along the $y$-axis for the $n$-th waveguide.
Furthermore, we denote the $k$-th user and the target by $u_k, \forall k$, and $q$, respectively, with corresponding 3D positions given by $\bm{\phi}_k = \left[x_k, y_k, 0\right]^{\top}$ and $\bm{\phi}_q = \left[x_q, y_q, 0\right]^{\top}$.

\subsection{Channel and Signal Model}
To investigate the performance upper bound, we neglect the intra-waveguide power attenuation~\cite{guo2025learning, li2025pinching} and adopt a quasi-static channel model with perfect\footnote{The sensing target parameters (e.g., range/angle) and channel estimation, along with the induced imperfections, are left for future work.} channel state information and assume that the target location is known at the BS~\cite{hu2025}. The downlink spherical-wave near-field wireless channel and the in-waveguide phase-shift coefficient between the PA of the $n$-th waveguide, $\forall n$, and the $k$-th user, $\forall k \in 
\mathcal{K}$, are~\cite{conference}:
\begin{align}\label{eq:channel_user}
h_{n,u_k}^{\rm t} &= \frac{\eta^{1/2} e^{-j\frac{2\pi}{\lambda}\|\bm{\phi}_k-\bm{\psi}^{\rm TPA}_{n}\|}}
     {\|\bm{\phi}_k-\bm{\psi}^{\rm TPA}_{n}\|}, \nonumber \\
g_n^{\rm t} &=\hspace{-0.8mm} e^{-j\frac{2\pi}{\lambda_g}\hspace{-0.5mm}\left\|\hspace{-0.4mm}\bm{\psi}_n^{\rm{FP}} \hspace{-0.8mm}-\hspace{-0.2mm} \bm{\psi}_{n}^{\rm TPA}\hspace{-0.6mm}\right\|}, \hspace{0.2mm} 
\end{align}
respectively, where 
$\eta = \tfrac{c^2}{16\pi^2f_c^2}$ and $\lambda_g = \frac{\lambda}{\eta_{\text{eff}}}$ with $c, f_c$, $\lambda$ and $\eta_{\text{eff}} > 1$ representing the speed of light, carrier frequency, wavelength, and effective refractive index~\cite{hu2025}, respectively. 
The equivalent downlink channel of the $k$-th user is defined as $\bm{\beta}_{u_k} \hspace{-1mm}= \Bigl[ 
h_{1,u_k}^{\rm t} g_1^{\rm t}, \ldots, h_{N,u_k}^{\rm t} g_N^{\rm t}
\Bigr]^{\hspace{-0.5mm}\top}\hspace{-2mm}, \forall k$, and analogously the effective downlink channel vector of the target is denoted by
$\bm{\beta}_{q}
\hspace{-1mm}= \hspace{-1mm}\Bigl[
h_{1,q}^{\rm t}g_1^{\rm t}, \ldots,  h_{N,q}^{\rm t} g_N^{\rm t}
\Bigr]^{\top}$.
Similarly, we define the composite received channel vector from the target across the $N$ waveguides as $\textbf{c}_{\rm R} = [h^{\rm r}_{1,q}g^{\rm r}_1, ..., h^{\rm r}_{N,q}g^{\rm r}_N]^{\top} \in \mathbb{C}^{N \times 1}$.

We let $s_k\in\mathbb{C}$ denote the information symbol intended for $u_k$ and define the transmitted signal vector as $\textbf{s}\triangleq \sum_{k\in\mathcal{K}} \textbf{w}_k s_k \in \mathbb{C}^{N\times 1}$, where $\mathbb{E}\{|s_k|^2\} \hspace{-1mm}=\hspace{-1mm} 1$, $\mathbb{E}\{s_j^{*}s_k\}\hspace{-1mm}=\hspace{-1mm} 0$, $\forall j \hspace{-1mm}\neq\hspace{-1mm} k$, and $\textbf{w}_k \hspace{-1mm}=\hspace{-1mm} [w_{k,1},\ldots,w_{k,N}]^{\top} \hspace{-2mm}\in\hspace{-1mm} \mathbb{C}^{N\times 1}$ is the beamforming vector. Following~\cite{attiah2024beamforming}, we reuse the downlink communication waveform for sensing\footnote{\hspace{-0.5mm}A dedicated sensing waveform could expand the DoFs but increase complexity for optimization and receiver processing. We reserve this investigation for future work. Moreover, direct TPA-RPA leakage is assumed negligible via cancellation; target-induced downlink scattering and user-induced reflections are ignored as their magnitudes are much weaker than direct links~\cite{attiah2024beamforming, guo2025learning}.}. To enforce zero excitation on RWGs, we set their power budgets to zero via $\sum_{k \in \mathcal{K}}|w_{k,n}|^2 \le \tau_n P_n^{\max}$, where $P_n^{\max}$ denotes the power budget of the $n$-th waveguide.
Let $\alpha \in \mathbb{C}$ denote the radar cross section dependent target reflection coefficient~\cite{guo2025learning}, the received signal at the $k$-th user and the received echo vector at the BS are given by:
\begin{align}\label{eq:rx_sig}
r_k &= \bm{\beta}_{u_k}^H\textbf{s} + \epsilon_k, \forall k, \nonumber \\
\hat{\textbf{r}}_{\rm R} &= \alpha \textbf{R} \textbf{c}_{\rm R}\bm{\beta}_q^H
\textbf{s} + \bm{\epsilon}_{\rm R},
\end{align}
respectively. Here, $\epsilon_k \sim \mathcal{CN}\left(0, \sigma_{k}^2\right)$ and $\bm{\epsilon}_{\rm R} \sim \mathcal{CN}\left(\bm{0}, \sigma^2_{\rm R}\textbf{R}\right)$ denote the additive white Gaussian noise (AWGN) at the $k$-th user and the RPAs, respectively. 

\subsection{Problem Formulation}
We evaluate the ISAC performance from both communication and sensing perspectives. The achievable data rate in bits/s/Hz for the $k$-th user can be written as:
\begin{align}\label{eq:rate}
R_k &= \log_2\left(1 + \Gamma_k\right), \forall k, \nonumber \\
\Gamma_k &= \frac{|\bm{\beta}_{u_k}^H\textbf{w}_k|^2}{\underset{i \neq k}{\sum}|\bm{\beta}_{u_k}^H\textbf{w}_i|^2 + \sigma_k^2}, \forall k.
\end{align}
We utilize the post-combining sensing SNR at the BS as a proxy metric for target estimation reliability~\cite{liu2021cramer}. With $\textbf{u}_{\rm R} = \tfrac{\textbf{Rc}_{\rm R}}{||\textbf{Rc}_{\rm R}||_2}$, the received radar signal and the average SNR\footnote{We employ a normalized maximum-ratio combining (MRC) receiver to maximize the sensing SNR. Without loss of generality, we set $|\alpha|=1$ as the non-designable reflection coefficient only scales the SNR.} are:
\begin{align}
r_{\rm R} &= \alpha \textbf{u}_{\rm R}^H \textbf{R} \textbf{c}_{\rm R}\bm{\beta}_q^H
\textbf{s} + \textbf{u}_{\rm R}^H\bm{\epsilon}_{\rm R}, \nonumber \\ \Gamma^{\text{s}}
&\triangleq
\frac{\left\| \textbf{R} \textbf{c}_{\rm R}\right\|_2^2 \underset{k \in \mathcal{K}}{\sum} \left|\bm{\beta}_q^H
\textbf{w}_k\right|^2}
{\sigma^2_{\rm R}},
\end{align}
respectively.
We maximize the sensing SNR under communication QoS constraints by jointly optimizing waveguide-mode selection, transmit beamforming, and PA positioning, as:
\begin{align}
&\underset{\textbf{w}_k, \ x^{\rm{TPA}}_n,\ x^{\rm{RPA}}_n, \ \bm{\tau}} {\rm{maximize}} \hspace{2mm} \Gamma^{\rm s}  \nonumber \\
\rm{s.t.}\ &({\rm C1})\hspace{-1mm}:\hspace{-0.5mm} 0 \leq x_{n}^{\rm{TPA}} \leq L, \forall n, \ ({\rm C2})\hspace{-1mm}:\hspace{-0.5mm} 0 \leq x_{n}^{\rm{RPA}} \leq L, \forall n, \nonumber \\
&({\rm C3})\hspace{-1mm}:\hspace{-0.7mm} \sum_{k \in \mathcal{K}}\hspace{-0.5mm}\|\textbf{w}_k\|_2^2 \hspace{-0.5mm}\leq\hspace{-0.5mm} P^{\max}\hspace{-0.5mm},\nonumber \\
&({\rm C4})\hspace{-1mm}:\hspace{-1mm} \sum_{k \in \mathcal{K}}\hspace{-0.6mm}|\textbf{w}_{k,n}|^2 \hspace{-0.75mm}\leq\hspace{-0.65mm} \tau_n P^{\max}_n\hspace{-0.5mm}, \forall n, \ ({\rm C5})\hspace{-1mm}:\hspace{-0.5mm} \tau_n \in \{0, 1\}, \forall n, \nonumber \\ 
&({\rm C6})\hspace{-1mm}:\hspace{-0.5mm} K \leq \textbf{1}^{\top}\bm{\tau} \leq N-1, \ ({\rm C7})\hspace{-1mm}:\hspace{-0.5mm} R_k \geq R^{\rm Req}_{{\rm min}, k}, \, \forall k.
\label{eq: formulation}
\end{align}
Constraints $({\rm C1})$ and $({\rm C2})$ confine the PA locations within their respective waveguides. Constraints $({\rm C3})$ and $({\rm C4})$ enforce the total power limit, $P^{\max}$, and the per‑waveguide power budget, $P^{\max}_n, \forall n$, respectively. Moreover, constraints $({\rm C5})$ and $({\rm C6})$ ensure valid binary waveguide-mode assignments to support multi-user downlink transmission and echo reception. Finally, constraint $({\rm C7})$ guarantees the minimum communication QoS with a prescribed threshold $R^{\rm Req}_{{\rm min}, k}, \forall k$.

\section{Optimization Solution}
Since problem~\eqref{eq: formulation} is highly nonconvex, we adopt a BCD framework by alternating between optimizing $\{\textbf{w}_k,\bm{\tau}\}$ and $\{x^{\text{RPA}}_n, x^{\text{TPA}}_n\}$ to acquire a high-quality suboptimal solution.

\subsection{Solution of $\{\textbf{w}_k, \bm{\tau}\}$}
With $\textbf{H}_k = \bm{\beta}_{u_k}\bm{\beta}_{u_k}^H,   \textbf{W}_k = \textbf{w}_k\textbf{w}_k^H, \forall k$, $\textbf{G} = \bm{\beta}_{q}\bm{\beta}_{q}^H$,  $\textbf{C} = \text{diag}(|\textbf{c}_{\text{R},1}|^2, ..., |\textbf{c}_{\text{R},N}|^2)$, and $\Gamma^{\rm Req}_{{\rm min},k} = 2^{R^{\rm Req}_{{\rm min},k}}-1$ for a given $\{x^{\text{RPA}}_n, x^{\text{TPA}}_n\}$, the subproblem of beamforming and mode selection is written as:
\begin{align}
&\underset{\textbf{W}_k, \ \bm{\tau}} {\rm{maximize}} \hspace{2mm}
\frac{\text{Tr}(\textbf{R}^2\textbf{C}) \sum_{k \in \mathcal{K}} \text{Tr}(\textbf{GW}_k)}
{\sigma^2_{\rm R}} \nonumber \\
\rm{s.t.}\ &({\rm C5}), ({\rm C6}), \nonumber \\
&({\rm C3})\hspace{-0.8mm}:\hspace{-0.8mm} \sum_{k \in \mathcal{K}}\hspace{-0.5mm}\text{Tr}(\textbf{W}_k) \hspace{-0.7mm}\leq\hspace{-0.7mm} P^{\max}, \nonumber \\
&({\rm C4})\hspace{-0.8mm}:\hspace{-0.8mm} \sum_{k \in \mathcal{K}}[\textbf{W}_{k}]_{n,n} \hspace{-0.7mm}\leq\hspace{-0.7mm} \tau_n P^{\max}_n, \forall n,\nonumber \\
& ({\rm C7})\hspace{-0.8mm}:\hspace{-0.8mm}  \text{Tr}(\textbf{H}_k\textbf{W}_k) \geq \Gamma^{\rm Req}_{{\rm min},k} \Bigl(\sum_{i \neq k}\text{Tr}(\textbf{H}_k\textbf{W}_i) + \sigma_k^2\Bigl), \forall k, \nonumber \\
&({\rm C8})\hspace{-0.8mm}:\hspace{-0.8mm} \textbf{W}_k \bm{\succeq 0}, \forall k, \,
({\rm C9})\hspace{-0.8mm}:\hspace{-0.8mm} \text{Rank}(\textbf{W}_k) \le 1, \forall k,
\label{eq: formulation1_1}
\end{align}
which remains nonconvex due to the multiplicative objective, the binary nature of $\bm{\tau}$, and the rank-one constraints.

We handle $({\rm C5})$ by expressing it in its equivalent forms:
\begin{align}
({\rm C5a})\hspace{-0.8mm}:\hspace{-0.6mm} \sum_{n \in \mathcal{N}} (\tau_n \hspace{-0.6mm}-\hspace{-0.6mm} \tau_n^2) \leq 0; \, 
({\rm C5b})\hspace{-0.8mm}:\hspace{-0.6mm} 0 \hspace{-0.6mm}\le\hspace{-0.6mm} \tau_n \le \hspace{-0.6mm}1, \forall n,
\end{align}
where $({\rm C5a})$ admits a difference of convex (D.C.) structure and $({\rm C5b})$ is affine.
We augment $({\rm C5a})$ with a penalty term and introduce slack optimization variables $u,v\ge 0$ to tackle the bilinear challenge in the objective. Using the identity $uv=\frac{(u+v)^2-u^2-v^2}{2}$, the penalized reformulation is:
\begin{align}
&\underset{\textbf{W}_k, \ \bm{\tau}, \ u, \, v} {\rm{maximize}} \hspace{2mm}
\frac{(u+v)^2 - u^2 -v^2}{2\sigma_{\rm R}^2} - \rho_1 \sum_{n \in \mathcal{N}}(\tau_n - \tau_n^2) \nonumber \\
\rm{s.t.}&
({\rm C3}), ({\rm C4}), ({\rm C5b}), ({\rm C6})- ({\rm C9}),\nonumber \\
&({\rm C10})\hspace{-0.9mm}:\hspace{-0.8mm} u \hspace{-0.8mm}\le\hspace{-1mm} \sum_{n \in \mathcal{N}} \hspace{-0.5mm} |\textbf{c}_{{\rm R},n}|^2 (\hspace{-0.1mm}1\hspace{-0.2mm}-\hspace{-0.2mm}\tau_n\hspace{-0.1mm})^2\hspace{-0.2mm}, \nonumber \\
&({\rm C11})\hspace{-0.9mm}:\hspace{-0.8mm} v\hspace{-0.5mm} \le\hspace{-0.8mm} \sum_{k\in K}\hspace{-0.5mm}\text{Tr}(\textbf{GW}_k).
\label{eq: formulation1_2}
\end{align}
It can be shown that a sufficiently large penalty factor $\rho_1$ makes~\eqref{eq: formulation1_2} equivalent to~\eqref{eq: formulation1_1}~\cite{blockage}. To handle the nonconvexities, we apply an iterative MM technique via first-order Taylor expansions, deriving global lower bounds for the nonconvex terms and a convex subset of $({\rm C10})$, as:
\begin{align}
(\hspace{-0.4mm}u \hspace{-0.6mm} + \hspace{-0.6mm}v\hspace{-0.4mm})^2 & \ge ((\overline{u+v})^{2})^{(i_1)} \hspace{-1mm}=\hspace{-1mm} o^2 \hspace{-0.6mm}+\hspace{-0.6mm} 2o \hspace{-0.2mm} \bigl((u\hspace{-0.4mm}-\hspace{-0.4mm}u^{(i_1)}) \hspace{-0.6mm}+\hspace{-0.6mm} (v\hspace{-0.4mm}-\hspace{-0.4mm}v^{(i_1)})\bigl)\hspace{-0.2mm}, \nonumber\\
\tau_n^2 & \ge \overline{\tau_n^2}^{(i_1)} \nonumber = 2 \tau_n^{(i_1)} \tau_n-(\tau_n^{(i_1)})^2, \forall n, \\
(\overline{\rm C10}) &: u - \sum_{n \in \mathcal{N}} |\textbf{c}_{{\rm R},n}|^2 \left(1-2\tau_n+ \overline{\tau_n^2}^{(i_1)} \right) \le 0,
\end{align}
respectively, where $o\hspace{-1mm} =\hspace{-1mm} u^{(i_1)} \hspace{-1mm}+\hspace{-0.7mm}v^{(i_1)}$ and $(\overline{\rm C10}) \hspace{-0.7mm}\Rightarrow\hspace{-0.7mm} (\rm C10)$. The superscript ``$(i_1)$'' denotes the solutions obtained from the $i_1$-th MM iteration and a suboptimal solution to~\eqref{eq: formulation1_1} can be acquired by solving the below problem in the $(i_1+1)$-th MM iteration:
\begin{align}
&\underset{\textbf{W}_k, \ \bm{\tau}, \ u, \, v} {\rm{maximize}} \hspace{2mm}
\frac{(\overline{u+v})^{2(i_1)} \hspace{-0.5mm} - \hspace{-0.5mm} u^2 \hspace{-0.5mm} - \hspace{-0.5mm} v^2}{2\sigma_{\rm R}^2} - \rho_1 \hspace{-0.5mm} \sum_{n \in \mathcal{N}}(\tau_n - \overline{\tau_n^2}^{(i_1)}) \nonumber \\
\rm{s.t.} \,
&({\rm C3}), ({\rm C4}), ({\rm C5b}), ({\rm C6})-({\rm C8}), \nonumber \\
&\cancel{({\rm C9})}, (\overline{{\rm C10}}), ({\rm C11}), 
\label{eq: formulation1_3}
\end{align}
where the rank constraint~({\rm C9}) is relaxed such that problem~\eqref{eq: formulation1_3} is convex and can be solved by a standard convex programming solver. The tightness of the SDR is revealed in the following theorem.
\begin{theorem}
If $P_{\max},R_{{\rm min},k}^{\rm Req} > 0, \forall k$, and~\eqref{eq: formulation1_3} is feasible, then a solution satisfying $\text{Rank}(\textbf{W}_k) \le 1, \forall k$, can always be obtained from~\eqref{eq: formulation1_3}.
\end{theorem}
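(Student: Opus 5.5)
Fix the iterate variables and consider the convex problem~\eqref{eq: formulation1_3} with fixed $\bm{\tau}$, $u$, $v$ at an optimal solution $\{\textbf{W}_k^\star, \bm{\tau}^\star, u^\star, v^\star\}$. The objective is nondecreasing in $v$, and $v$ enters only through (C11). So at optimum we may take $v^\star=\min\{\sum_k \text{Tr}(\textbf{G}\textbf{W}_k^\star), \cdot\}$, and the dependence on $\{\textbf{W}_k\}$ is equivalent to that of the SDP
\begin{align}
\underset{\textbf{W}_k\succeq \bm{0}}{\rm maximize}\ \sum_{k}\text{Tr}(\textbf{G}\textbf{W}_k)\quad \text{s.t. (C3), (C4), (C7)},
\end{align}
with $\bm{\tau}=\bm{\tau}^\star$ held fixed. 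Any optimal solution of this SDP, combined with $\bm{\tau}^\star,u^\star,v^\star$, remains optimal for~\eqref{eq: formulation1_3}. The plan is therefore to prove that this reduced SDP (feasible, since~\eqref{eq: formulation1_3} is) has a rank-one optimal solution, via the KKT conditions, in the spirit of standard SDR-tightness arguments.

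I will write the Lagrangian with multipliers $\mu\ge 0$ for (C3), $\nu_n\ge0$ for (C4), $\lambda_k\ge0$ for (C7), and $\textbf{Y}_k\succeq\bm{0}$ for (C8). Stationarity gives
\begin{align}
\textbf{Y}_k = \mu\textbf{I}_N + \textbf{D}_\nu + \sum_{i\neq k}\lambda_i\Gamma^{\rm Req}_{{\rm min},i}\textbf{H}_i - \textbf{G} - \lambda_k \textbf{H}_k ,
\end{align}
where $\textbf{D}_\nu=\text{diag}(\nu_1,\dots,\nu_N)$, together with $\textbf{Y}_k\textbf{W}_k=\bm{0}$. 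Define $\textbf{A}_k=\mu\textbf{I}_N+\textbf{D}_\nu+\sum_{i\ne k}\lambda_i\Gamma^{\rm Req}_{{\rm min},i}\textbf{H}_i$.

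The key step is to show that $\textbf{A}_k$ has full rank on the relevant subspace. Because $R^{\rm Req}_{{\rm min},k}>0$, we have $\textbf{W}_k\neq\bm{0}$. Rows and columns indexed by $\mathcal{N}_{\rm r}$ are forced to zero by (C4) with $\tau_n=0$ (a PSD matrix with a zero diagonal entry has that whole row and column zero). So I restrict everything to the TWG coordinates $\mathcal{N}_{\rm t}$, where I aim to show $\textbf{A}_k\succ\bm{0}$. If $\mu>0$ this is immediate. If $\mu=0$ and some $\nu_n$ vanish, I would argue by contradiction: multiply $\textbf{Y}_k\succeq\bm{0}$ by suitable vectors and use that $\textbf{G}$ and $\textbf{H}_k$ are rank one, so that $\textbf{Y}_k\succeq\bm{0}$ forces the restricted $\textbf{A}_k$ to be positive definite on the span containing $\bm{\beta}_q$ and $\bm{\beta}_{u_k}$. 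This is the main obstacle, since $\mu$ might be zero when only the per-waveguide budgets bind.

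If strict positivity cannot be shown, the fallback is a rank-reduction step. Given an optimal $\textbf{W}_k$ of rank greater than one, I would construct a rank-one solution as follows. Since the objective and all constraints involve $\textbf{W}_k$ only through $\text{Tr}(\textbf{G}\textbf{W}_k)$, $\text{Tr}(\textbf{H}_i\textbf{W}_k)$, and the diagonal entries, the Pataki / Huang–Palomar rank-reduction theorem for separable SDPs yields an optimal solution with $\sum_k \text{Rank}^2(\textbf{W}_k)$ bounded. Alternatively, I would add a small trace term to the objective as a tie-breaker that forces $\mu>0$.

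Assuming $\textbf{A}_k\succ\bm{0}$ on the TWG subspace, we have $\text{Rank}(\textbf{Y}_k)\ge \text{Rank}(\textbf{A}_k)-\text{Rank}(\textbf{G}+\lambda_k\textbf{H}_k)\ge |\mathcal{N}_{\rm t}|-2$. That only gives rank at most two. To sharpen this to rank one, I would write $\textbf{G}+\lambda_k\textbf{H}_k$ and apply the usual trick: $\textbf{W}_k$ lies in the null space of $\textbf{Y}_k$. Then, given any optimal $\textbf{W}_k$ of rank two, I would construct $\widetilde{\textbf{W}}_k$ that preserves $\text{Tr}(\textbf{H}_i\widetilde{\textbf{W}}_k)$ for all $i$, preserves $\text{Tr}(\textbf{G}\widetilde{\textbf{W}}_k)$, and does not increase the diagonal entries. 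For example, one can take the projection $\widetilde{\textbf{W}}_k=\textbf{W}_k\bm{\beta}_{u_k}\bm{\beta}_{u_k}^H\textbf{W}_k/(\bm{\beta}_{u_k}^H\textbf{W}_k\bm{\beta}_{u_k})$ combined with a residual absorbed into the other beamformers. Verifying that this keeps (C4) and the objective intact is the delicate part, and I would check it by showing that $\textbf{W}_k-\widetilde{\textbf{W}}_k\succeq\bm{0}$.
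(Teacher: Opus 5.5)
Your reduction to an SDP in $\{\textbf{W}_k\}$ at fixed $\bm{\tau}^\star$ is sound, but not for the reason you give. The surrogate objective of \eqref{eq: formulation1_3} contains $(2ov-v^2)/(2\sigma_{\rm R}^2)$, which peaks at $v=o$ and is not nondecreasing. What saves the reduction is that any SDP maximizer still satisfies (C11) with the same $v^\star$. Using the KKT conditions is also the route of the paper's one-line sketch.

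The genuine gap is the passage from rank two to rank one. Even granting $\textbf{A}_k\succ\textbf{0}$ on the TWG coordinates, $\textbf{Y}_k\textbf{W}_k=\textbf{0}$ only confines $\textbf{W}_k$ to a null space of dimension up to two. Your repair does not close this:
\begin{itemize}
\item $\widetilde{\textbf{W}}_k=\textbf{W}_k\bm{\beta}_{u_k}\bm{\beta}_{u_k}^H\textbf{W}_k/(\bm{\beta}_{u_k}^H\textbf{W}_k\bm{\beta}_{u_k})$ preserves $\text{Tr}(\textbf{H}_k\textbf{W}_k)$ and shrinks interference and diagonal entries, but it can strictly lower $\text{Tr}(\textbf{G}\widetilde{\textbf{W}}_k)$.
\item Moving the residual $\textbf{W}_k-\widetilde{\textbf{W}}_k\succeq\textbf{0}$ into another beam $\textbf{W}_j$ does keep every constraint and the objective intact. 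However, it raises $\text{Rank}(\textbf{W}_j)$, so at most $K-1$ beams become rank-one, and for $K=1$ there is nowhere to put it. The familiar ISAC tightness constructions of this type rely on a dedicated sensing covariance to absorb that residual, and this model reuses the communication waveform instead.
\item Huang--Palomar rank reduction only gives $\sum_k\text{Rank}^2(\textbf{W}_k)\le K+N+1$.
\item A trace tie-breaker alters the problem and still leaves rank two.
\end{itemize}

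In fact the reduced SDP need not have any rank-one maximizer, so no completion of this plan exists. Take $N=4$ and $\bm{\tau}^\star=[1,1,1,0]^\top$, so there are three TWGs with $P_n^{\max}=1$, and let (C3) be slack. Let $K=1$, $\bm{\beta}_q=\mathbf{1}$ and $\bm{\beta}_{u_1}=\textbf{h}=[1,\omega,\omega^2]^\top$ on the TWGs, with $\omega=e^{j2\pi/3}$, and $\Gamma^{\rm Req}_{{\rm min},1}\sigma_1^2=\gamma\in(0,9)$.

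With $\textbf{e}=[1,\omega^2,\omega]^\top$, the vectors $\mathbf{1},\textbf{h},\textbf{e}$ are mutually orthogonal with squared norm $3$. Hence $\mathbf{1}^H\textbf{W}\mathbf{1}+\textbf{h}^H\textbf{W}\textbf{h}\le 3\,\text{Tr}(\textbf{W})\le 9$, and the SDP value is $9-\gamma$. It is attained by the rank-two matrix $\textbf{W}=\tfrac{1}{9}\bigl[(9-\gamma)\mathbf{1}\mathbf{1}^H+\gamma\textbf{h}\textbf{h}^H\bigr]$, which has unit diagonal.

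No rank-one $\textbf{w}\textbf{w}^H$ attains this value. It would need $|w_n|=1$ for all $n$ and $\textbf{e}^H\textbf{w}=0$, i.e., $w_n=a+b\,\omega^{n-1}$ with $|w_n|^2=|a|^2+|b|^2+2\Re(\bar{a}b\,\omega^{n-1})=1$ for $n=1,2,3$. Summing gives $|a|^2+|b|^2=1$, so $\Re(\bar{a}b)=\Re(\bar{a}b\,\omega)=0$, which forces $ab=0$. Then either $\textbf{h}^H\textbf{w}=0$, which is infeasible, or $\mathbf{1}^H\textbf{w}=0$, which gives objective zero.

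The dual certificate is $\mu=0$, $\nu_n=3$, $\lambda_1=1$. So $\textbf{A}_1=3\textbf{I}\succ\textbf{0}$ and, on the TWG coordinates, $\textbf{Y}_1=\textbf{e}\textbf{e}^H$. The obstacle you singled out ($\textbf{A}_k\succ\textbf{0}$) is absent here; the real one is the two-dimensional null space created by the binding per-waveguide budgets (C4).

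An optimal solution of \eqref{eq: formulation1_3} only needs $\sum_k\text{Tr}(\textbf{G}\textbf{W}_k)\ge v^\star$. A correct argument must therefore either show that $v^\star$ stays within reach of rank-one beamformers, or use extra structure. For example, without per-waveguide budgets on the TWGs only $K+1$ constraints remain, and Huang--Palomar then does give rank one. The paper's KKT sketch does not supply this step either.
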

\begin{proof}
Due to space limitations, we provide only a proof sketch. By examining the Karush-Kuhn-Tucker conditions, it follows that there always exists an optimal rank-one solution $\textbf{W}_k^\star$. Moreover, $\textbf{W}_k^\star$ can be explicitly constructed from the optimal dual variables of the corresponding dual problem.
\end{proof}

\subsection{Solution of the PA positioning}
For a given $\textbf{w}_k$ and $ \bm{\tau}$ with binary entries, $(\textbf{R}(\bm{\tau}))^2=\textbf{R}(\bm{\tau})$. Hence, the PA-positioning subproblem can be written as:
\begin{align}
&\underset{x^{\rm{TPA}}_n,\ x^{\rm{RPA}}_n} {\rm{maximize}} \hspace{2mm} \frac{\text{Tr}(\textbf{R}\textbf{C})  \underset{k \in \mathcal{K}}{\sum} \text{Tr}(\textbf{GW}_k)}
{\sigma^2_{\rm R}} \nonumber \\
\rm{s.t.}\ &({\rm C1}), ({\rm C2}), ({\rm C7}).
\label{eq: formulation2}
\end{align}
Since $x_n^{\rm RPA}$ affects~\eqref{eq: formulation2} only through $\text{Tr}(\textbf{R}\textbf{C})$ via the distance-dependent path loss, we obtain the following result.
\begin{lemma}
For $x_q \in [0, L]$, and for each RWG, i.e., $[\textbf{R}]_{n,n}=1$, an optimal RPA location is $x_{n}^{{\rm RPA}\star}=x_q, \ \forall n$.
\end{lemma}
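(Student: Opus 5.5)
The plan is to show that problem~\eqref{eq: formulation2} decouples in the RPA coordinates, so each $x_n^{\rm RPA}$ can be optimized separately by a one-dimensional distance-minimization argument.

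\textbf{Step 1: isolate the dependence on $x_n^{\rm RPA}$.} With $\textbf{w}_k$, $\bm{\tau}$ and $\{x_n^{\rm TPA}\}$ fixed, the quantities $\textbf{G}=\bm{\beta}_q\bm{\beta}_q^H$ and $\textbf{H}_k$ depend only on the TPA positions. Hence the factor $\sum_{k}\text{Tr}(\textbf{G}\textbf{W}_k)/\sigma_{\rm R}^2\ge 0$ and constraint (C7) do not involve $x_n^{\rm RPA}$. The only constraint on the RPA positions is then the box constraint (C2). The objective is a nonnegative constant times
\begin{align}
\text{Tr}(\textbf{R}\textbf{C})=\sum_{n\in\mathcal{N}_{\rm r}}|h^{\rm r}_{n,q}g^{\rm r}_n|^2 .
\end{align}
This sum is separable in $\{x_n^{\rm RPA}\}_{n\in\mathcal{N}_{\rm r}}$. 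If the constant is zero, every feasible choice is optimal, so $x_q$ is optimal trivially. Otherwise it suffices to maximize each summand independently over $[0,L]$.

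\textbf{Step 2: reduce each summand to a distance.} By analogy with~\eqref{eq:channel_user}, the in-waveguide coefficient $g_n^{\rm r}$ is a pure phase, so $|g_n^{\rm r}|=1$. The receive channel has modulus
\begin{align}
|h^{\rm r}_{n,q}|=\frac{\eta^{1/2}}{\|\bm{\phi}_q-\bm{\psi}_n^{\rm RPA}\|}.
\end{align}
The $n$-th summand is therefore $\eta/\|\bm{\phi}_q-\bm{\psi}_n^{\rm RPA}\|^2$. All phase dependence on $x_n^{\rm RPA}$ disappears because the objective uses $|\textbf{c}_{{\rm R},n}|^2$.

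\textbf{Step 3: minimize the distance.} The squared distance is
\begin{align}
\|\bm{\phi}_q-\bm{\psi}_n^{\rm RPA}\|^2=(x_n^{\rm RPA}-x_q)^2+(D_n-y_q)^2+d^2 .
\end{align}
Maximizing $\eta/\|\cdot\|^2$ is equivalent to minimizing this strictly convex quadratic in $x_n^{\rm RPA}$ over $[0,L]$. Its unconstrained minimizer is $x_q$, which lies in $[0,L]$ by assumption, so $x_n^{{\rm RPA}\star}=x_q$ for every RWG. For TWGs with $[\textbf{R}]_{n,n}=0$, the variable $x_n^{\rm RPA}$ does not appear and may be set arbitrarily, for instance also to $x_q$, which gives the ``$\forall n$'' form.

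\textbf{Main obstacle.} The only delicate point is justifying that $x_n^{\rm RPA}$ enters neither (C7) nor the transmit factor $\text{Tr}(\textbf{G}\textbf{W}_k)$, i.e., that transmit and receive PA positions are decoupled in the model. I would check this directly against the definitions of $\bm{\beta}_{u_k}$ and $\bm{\beta}_q$, which involve only $\bm{\psi}_n^{\rm TPA}$. The remaining steps are then elementary.
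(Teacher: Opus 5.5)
Your proof is correct and follows the same route as the paper. Both write $\text{Tr}(\textbf{R}\textbf{C})$ as a separable sum of terms $\eta/\bigl((x_q-x_n^{\rm RPA})^2+\hat{s}_{n,q}\bigr)$, note that each term decreases in the squared horizontal offset, and minimize that offset over $[0,L]$ at $x_q$. The paper leaves implicit the points you make explicit: that (C7) and the nonnegative transmit factor do not depend on $x_n^{\rm RPA}$, the case where that factor is zero, and the free choice of the TWG coordinates.
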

\begin{proof}
With
$
\text{Tr}(\textbf{R}\textbf{C})
=\sum_{n=1}^{N}\tfrac{[\textbf{R}]_{n,n}\eta}{(x_q-x_n^{\rm RPA})^2+\hat{s}_{n,q}}$ and $
\hat{s}_{n,q}\triangleq (y_q-D_n)^2+d^2$, each summand is strictly decreasing in $(x_q-x_n^{\rm RPA})^2$. Thus, the maximum is attained by minimizing $(x_q-x_n^{\rm RPA})^2$ over  $x_{n}^{{\rm RPA}}$, i.e., by the Euclidean projection onto $[0, L]$, which yields $x_{n}^{{\rm RPA}\star}=x_q$.
\end{proof}


We let $u^* = \text{Tr}(\textbf{RC}(x_n^{{\rm RPA}*}))$ be fixed, the entrywise positive vector $\textbf{f}_{k} \hspace{-1.2mm}=\hspace{-1.2mm} \left[\tfrac{1}{\left\|\bm{\phi}_k - \bm{\psi}^{\rm{TPA}}_{1}\right\|}, ...\right.$, $\left.\tfrac{1}{\left\|\bm{\phi}_k - \bm{\psi}^{\rm{TPA}}_{N}\right\|} \right]^T \hspace{-2mm}\in \mathbb{R}^{N \times 1} $, $\textbf{C}_{k} \hspace{-1.2mm}= \text{diag}\hspace{-1mm}\left(\hspace{-0.8mm}w_{k,1} \eta^{\frac{1}{2}}, ..., w_{k,N} \eta^{\frac{1}{2}}\hspace{-1mm}\right) \in \mathbb{C}^{N \times N}$, and $\textbf{a}_{k} \hspace{-1.7mm} =\hspace{-1.5mm}\left[ e^{-j \theta_{1,k}}, ...,  e^{-j \theta_{N,k}} \hspace{-0.7mm}\right]^T \in \mathbb{C}^{N \times 1}, \forall k$, where $\theta_{n, k} \hspace{-1mm} = \hspace{-1mm} \frac{2\pi}{\lambda}\left\| \bm{\phi}_k \hspace{-0.7mm}-\hspace{-0.7mm} \bm{\psi}^{\rm{TPA}}_{n}\right\| + {\frac{2\pi}{\lambda_g}\left\|\bm{\psi}_n^{\rm{FP}} \hspace{-1mm}-\hspace{-0.7mm} \bm{\psi}_{n}^{\rm{TPA}}\right\|}, \forall n,k$. With two slack variables $\textbf{A}_k = \textbf{a}_k \textbf{a}_k^H \in \mathbb{C}^{N \times N}$ and $\textbf{F}_k = \textbf{f}_k\textbf{f}_k^H \in \mathbb{R}^{N \times N}, \forall k$, $\text{Tr}(\textbf{A}_k\textbf{C}_k\textbf{F}_k\textbf{C}_k^H)$ = $\text{Tr}(\textbf{H}_k\textbf{W}_k)$. Slack variables $\textbf{A}_q$ and $\textbf{F}_q$ are defined analogously, $\text{Tr}(\textbf{A}_q\textbf{C}_k\textbf{F}_q\textbf{C}_k^H)$ = $\text{Tr}(\textbf{G}\textbf{W}_k)$.
Introducing a slack variable $\vartheta>0$ and exploiting the Frobenius-norm identity, problem~\eqref{eq: formulation2} is equivalently written as:
\begin{align}
&\underset{x^{\rm{TPA}}_n,\ \textbf{X}, \ \vartheta, \  \theta_{n,q}, \ \theta_{n,k}} {\rm{maximize}} \hspace{4mm} \frac{\vartheta u^*}
{\sigma^2_{\rm R}}  \nonumber \\
&  \rm{s.t.}\ ({\rm C1}), \nonumber \\
&({\rm C7})\hspace{-1mm}: \hspace{-0.7mm}  \tfrac{\Gamma^{\rm Req}_{{\rm min}, k}\hspace{-0.5mm}\sum_{i \neq k} \hspace{-0.6mm} \left( \left\| \hspace{-0.3mm} \textbf{F}_{\hspace{-0.5mm}k} \hspace{-0.3mm}+\hspace{-0.1mm} \textbf{C}_i^{\hspace{-0.2mm}H} \hspace{-0.9mm}\textbf{A}_k \hspace{-0.4mm} \textbf{C}_i  \hspace{-0.4mm}\right\|_{\hspace{-0.4mm}F}^2 
\hspace{-1mm}-\hspace{-0.1mm} \left\| \textbf{F}_{\hspace{-0.5mm}k} \right\|_{\hspace{-0.4mm}F}^2 
\hspace{-0.3mm}-\hspace{-0.2mm} \left\| \hspace{-0.4mm}\textbf{C}_i^{\hspace{-0.2mm}H} \hspace{-0.8mm} \textbf{A}_k \hspace{-0.3mm} \textbf{C}_i\hspace{-0.6mm} \right\|_{\hspace{-0.5mm}F}^2 \hspace{-0.4mm}\right) } {2}\hspace{-0.59mm}+\hspace{-0.59mm} \Gamma^{\rm Req}_{{\rm min},k} \sigma_k^2 \nonumber \\
& \hspace{7.8mm} \leq \hspace{-0.6mm} \tfrac{\left\| \textbf{F}_{\hspace{-0.3mm}k} \hspace{-0.5mm}+\hspace{-0.2mm} \textbf{C}_k^{\hspace{-0.2mm}H} \hspace{-0.5mm}\textbf{A}_k \hspace{-0.4mm} \textbf{C}_k  \right\|_F^2 
\hspace{-1.2mm}- \left\| \textbf{F}_{\hspace{-0.5mm}k} \right\|_{\hspace{-0.35mm}F}^2 
\hspace{-0.2mm}-\hspace{-0.2mm} \left\| \textbf{C}_k^{\hspace{-0.2mm}H} \hspace{-0.5mm} \textbf{A}_k \hspace{-0.3mm} \textbf{C}_k \right\|_F^2}{2}, \forall k, \nonumber \\
&({\rm C11})\hspace{-1mm}: \hspace{-0.7mm} \sum_{k \in \mathcal{K} }\Bigl(\left\| \textbf{F}_{\hspace{-0.5mm}q} \hspace{-0.8mm}+\hspace{-0.8mm} \textbf{C}_k^{\hspace{-0.2mm}H} \hspace{-0.99mm}\textbf{A}_q \hspace{-0.5mm} \textbf{C}_k  \right\|_F^2 
\hspace{-1.6mm}-\hspace{-0.7mm} \left\| \textbf{F}_{\hspace{-0.5mm}q} \right\|_F^2 
\hspace{-1mm}-\hspace{-0.7mm} \left\| \textbf{C}_k^{\hspace{-0.2mm}H} \hspace{-1.1mm} \textbf{A}_q \hspace{-0.3mm} \textbf{C}_k \right\|_F^2 \Bigl)\ge 2\vartheta, \nonumber \\
&({\rm C12a})\hspace{-1mm}: \hspace{-0.7mm} \left(x_{n}^{\rm TPA} - x_k\right)^2 \le \frac{1}{\text{Diag} (\textbf{F}_k)_{n}} - \hat{s}_{n,k}, \ \forall n, k, \nonumber \\
&({\rm C12b})\hspace{-1mm}: \hspace{-0.7mm} \left(x_{n}^{\rm TPA} - x_k\right)^2 \ge \frac{1}{\text{Diag} (\textbf{F}_k)_{n}} - \hat{s}_{n,k}, \ \forall n, k, \nonumber \\
&({\rm C13a})\hspace{-1mm}: \hspace{-0.7mm} \theta_{n,k} \hspace{-0.8mm} \le \hspace{-0.4mm} \frac{2\pi}{\lambda} \hspace{-0.49mm} \left(\hspace{-0.2mm}\text{Diag} (\textbf{F}_k)_{n}\hspace{-0.2mm}\right)^{\hspace{-0.3mm}-\frac{1}{2}} \hspace{-0.49mm}+ \hspace{-0.5mm}\frac{2 \pi}{\lambda_g}\hspace{-0.4mm} x_{n}^{\rm TPA}, \hspace{-0.3mm}\forall n, \hspace{-0.2mm}k, \nonumber \\
&({\rm C13b})\hspace{-1mm}: \hspace{-0.7mm} \theta_{n,k} \hspace{-0.8mm} \ge \hspace{-0.4mm} \frac{2\pi}{\lambda} \hspace{-0.49mm} \left(\text{Diag} (\textbf{F}_k)_{n}\right)^{\hspace{-0.2mm}-\frac{1}{2}} \hspace{-0.29mm}+ \hspace{-0.2mm}\frac{2 \pi}{\lambda_g}\hspace{-0.2mm} x_{n}^{\rm TPA}, \hspace{-0.2mm}\forall n, k, \nonumber \\
&({\rm C14})\hspace{-1mm}: \hspace{-0.7mm} [\hspace{-0.3mm}\textbf{A}_k\hspace{-0.3mm}]_{n,n'} \hspace{-1mm} = \hspace{-1mm} e^{-\hspace{-0.1mm}j\hspace{-0.1mm}\left(\hspace{-0.2mm} \theta_{n,k} -\theta_{n',k} \hspace{-0.3mm} \right)}, \forall n, n'\hspace{-0.35mm},  k, \nonumber \\
&({\rm C15}) \hspace{-1mm}: \hspace{-0.7mm}  \textbf{X} \bm{\succeq} \textbf{0},  \forall \, \textbf{X}, \ ({\rm C16})\hspace{-1mm}: \hspace{-0.7mm} \text{Rank}\left(\textbf{X}\right)\hspace{-0.45mm} =\hspace{-0.3mm} 1\hspace{-0.1mm},\hspace{-0.1mm} \forall \, \textbf{X}, \nonumber \\
&({\rm C17}), ({\rm C18}), ({\rm C19}), \nonumber \\
& ({\rm C20})\hspace{-1mm}: \hspace{-0.7mm} \text{diag}(\textbf{A}_k) = 1, \forall k, \text{diag}(\textbf{A}_q) = 1,
\label{eq: formulation2_1}
\end{align}
where  $\textbf{X} \hspace{-1mm}=\hspace{-1mm} \{\textbf{A}_k, \textbf{F}_k, \textbf{A}_q, \textbf{F}_q\}, \hat{s}_{n,k} \hspace{-1mm}\triangleq\hspace{-1mm} (y_k \hspace{-1mm}-\hspace{-1mm} D_n)^2\hspace{-1.5mm}+\hspace{-0.6mm} d^2$, and  $\text{Diag}(\textbf{F}_k)_{n}\hspace{-1mm}=\hspace{-2mm}[\textbf{F}_k]_{n,n}$. Constraints $({\rm C12})-({\rm C14})$ explicitly couple $x_n^{\rm TPA}$ to slack variables $\textbf{F}_k$, $\textbf{A}_k$, and $\theta_{n,k}$.
We omit the details of $({\rm C17})$-$ ({\rm C19})$, as they are constructed by replacing user index $k$ with $q$.
Specifically, the paired inequalities $({\rm C12a})-({\rm C12b})$ and $({\rm C13a})-({\rm C13b})$ enforce the desired equalities~\cite{blockage}, where $({\rm C13b})$ is convex and the others are D.C. structure.
To handle the rank-one constraints, we introduce the following lemma
\begin{lemma}
The rank-one constraints $({\rm C16})$ are equivalent to:
\end{lemma}
\vspace{-1.5em}
\begin{equation}
(\widetilde{{\rm C16}}): \ \left\|\textbf{X}\right\|_* - \left\|\textbf{X}\right\|_2 \leq 0, \textbf{X} \in \{\textbf{A}_k,\textbf{F}_k, \forall k, \textbf{A}_q, \textbf{F}_q\}.
\end{equation}

\begin{proof}
For any $\textbf{X} \in \mathbb{H}^n \succeq \textbf{0}$ , the inequality $\left\|\textbf{X}\right\|_* = \sum_i\Lambda_i \geq \left\|\textbf{X}\right\|_2 = \text{max}\{\Lambda_i\}$ holds, where $\Lambda_i$ is the $i$-th singular value of $\textbf{X}$. The equality holds if and only if $\textbf{X}$ is rank-one.
\end{proof}

Constraint $(\widetilde{{\rm C16}})$ now admits a D.C. form.
To tackle the challenging phase-coupling nonconvexity, we equivalently transform $({\rm C14})$ by leveraging the trigonometric decomposition~\cite{conference} and express it with the first column of $\textbf{A}_k$ as
\begin{align}
\label{eq: C14}
({\rm C14a})\hspace{-0.5mm} &:\hspace{-0.5mm} \Re([\textbf{A}_k]_{l,1}) \hspace{-0.3mm}=\hspace{-0.2mm} \cos(\hat{\theta}_{l,k}) , \nonumber \\
({\rm C14b})\hspace{-0.5mm}&:\hspace{--0.5mm} \Im([\textbf{A}_k]_{l,1}) \hspace{-0.3mm}=\hspace{-0.2mm} -\sin(\hat{\theta}_{l,k}),
\end{align}
where $l \in \{2, ..., N\}$ and $\hat{\theta}_{l,k} = \theta_{l,k} - \theta_{1,k}$. 
Constraint $({\rm C19})$ follows analogously and is omitted for brevity. We introduce a penalized form of the optimization problem \eqref{eq: formulation2_1} as:
\begin{align}\label{eq: formulation2_2}
&\underset{x^{\rm{TPA}}_n, \ \textbf{X}, \ \vartheta,\  \bm{\theta}} {\rm{maximize}} \hspace{2mm}
\tfrac{\vartheta u^*}
{\sigma^2_{\rm R}}- \rho_2 \mathcal{T}  \nonumber \\
\text{s.t.} & ({\rm C1}), ({\rm C7}), ({\rm C11}), ({\rm C12a})-({\rm C13b}),  \nonumber \\
&({\rm C15}),(\widetilde{\rm C16}), ({\rm C17a}) \hspace{-0.5mm}-\hspace{-0.5mm}({\rm C18b}), ({\rm C20}), \nonumber \\
&(\text C21)\hspace{-0.4mm}:\hspace{-0.4mm} \hat{\theta}_{l,k} \hspace{-0.5mm} =\hspace{-0.7mm} \theta_{l,k} \hspace{-0.7mm}-\hspace{-0.7mm} \theta_{1,k}, \forall k,l, \nonumber \\
&(\text C22)\hspace{-0.4mm}:\hspace{-0.4mm} \hat{\theta}_{l,q} \hspace{-0.5mm} =\hspace{-0.7mm} \theta_{l,q} \hspace{-0.7mm}-\hspace{-0.7mm} \theta_{1,q}, \forall l,
\end{align}
where $\mathcal{T} \hspace{-1mm}=\hspace{-1mm} \underset{k \in \mathcal{K}}{\sum} \overset{N}{\underset{l=2}{\sum}} 
\Phi\bigl([\textbf{A}_k]_{l,1}, \hat{\theta}_{l,k}\bigl) + \sum_{l=2}^{N}\Phi\bigl([\textbf{A}_q]_{l,1}, \hat{\theta}_{l,q}\bigl)$, $\bm{\theta} = \{\theta_{l,k}, \theta_{l,q}, \hat{\theta}_{l,k}, \hat{\theta}_{l,q}\}$ and { $\Phi\hspace{-0.4mm}\bigl(\hspace{-0.3mm}[\textbf{A}_k]_{\hspace{-0.3mm}l,\hspace{-0.3mm}1}\hspace{-0.3mm},\hat{\theta}_{\hspace{-0.3mm}l\hspace{-0.3mm},a}\hspace{-0.45mm}\bigl) \hspace{-0.35mm}=\hspace{-0.4mm} |\hspace{-0.3mm}\Re\hspace{-0.4mm}\bigl(\hspace{-0.2mm}[\textbf{A}_k]_{l,1}\hspace{-0.6mm}\bigl) \hspace{-0.4mm}$ $-$ $\hspace{-0.4mm} \cos\hspace{-0.4mm}\bigl(\hspace{-0.4mm}\hat{\theta}_{l,a}\hspace{-0.3mm}\bigl) \hspace{-0.3mm}|^2 \hspace{-1mm}+\hspace{-0.6mm} |\hspace{-0.3mm}\Im\hspace{-0.6mm}\left(\hspace{-0.2mm}[\hspace{-0.2mm}\textbf{A}_k]_{l,1}\hspace{-0.4mm}\right)\hspace{-0.6mm} + \hspace{-0.3mm}\sin\hspace{-0.3mm}\bigl(\hspace{-0.3mm}\hat{\theta}_{l,a}\hspace{-0.6mm}\bigl)\hspace{-0.5mm}|^2, \forall l, a \in \{k \in \mathcal{K}, q\}$}. Here, the constant $\rho_2 \gg 1$ is a penalty coefficient that discourages violations of the equality constraints $({\rm C14a}),({\rm C14b}), ({\rm C19a}),$ and $({\rm C19b})$, where problems~\eqref{eq: formulation2_1} and~\eqref{eq: formulation2_2} are equivalent~\cite{blockage}.

To address the nonconvexity raised by the D.C. constraints and the periodic penalty term, we employ a penalty-based MM procedure to construct global underestimators via first-order Taylor expansions, and the convex subsets of the D.C. constraint sets are given by: 
\begin{align}\label{eq: SCA}
({\widetilde{{\rm C7}}}): &\frac{\underset{i \neq k}{\sum} \Gamma_{\min, k}^{\rm Req}\left( \left\| \textbf{F}_{\hspace{-0.2mm}k} \hspace{-0.4mm}+ \textbf{C}_i^{\hspace{-0.2mm}H} \hspace{-0.7mm}\textbf{A}_k \hspace{-0.3mm} \textbf{C}_i\hspace{-0.3mm} \right\|_{\hspace{-0.3mm}F}^2 
\hspace{-0.4mm} -  \overline{\mathcal{S}_{k,i}^{\rm aff}}\right)}{2} + \Gamma_{\min, k}^{\rm Req} \sigma_k^2 \nonumber \\ 
&\leq \overline{|| \textbf{F}_k +  \textbf{C}_k^{\hspace{-0.2mm}H}\hspace{-0.6mm} \textbf{A}_{\hspace{-0.2mm}k} \hspace{-0.3mm} \textbf{C}_k ||_F^2} \hspace{-0.65mm}-\hspace{-0.55mm} \left\| \textbf{F} \right\|_F^2  \hspace{-0.65mm}-\hspace{-0.65mm} || \textbf{C}_k^{\hspace{-0.2mm}H}\hspace{-0.6mm} \textbf{A}_{\hspace{-0.2mm}k} \hspace{-0.3mm} \textbf{C}_k ||_F^2 \hspace{-0.25mm}, \forall k, \nonumber \\
(\widetilde{\rm C12a}): & \left(x_{n}^{\rm TPA} - x_k\right)^2 - z_{n,k}^{\rm aff} + \hat{s}_{n,k} \leq 0 , \forall n,k,  \nonumber \\  
(\widetilde{\rm C12b}): & \frac{1}{\text{Diag} (\textbf{F}_k)_{n}} - \hat{s}_{n,k} - x_{n,k}^{\rm aff} \leq 0, \forall n,k, \nonumber \\
(\widetilde{\rm C13a}): & \theta_{n,k} -\overline{z_{n,k}^{\rm aff}} - \frac{2 \pi}{\lambda_g} x_{n}^{\rm TPA} \leq 0, \ \forall n, k, \nonumber \\  
(\widetilde{\overline{{\rm C16}}}): & \left\|\textbf{X}_k\right\|_* - \left\|\textbf{X}_k^{\rm aff}\right\|_2-||\hspace{-0.4mm}\textbf{X}^{(\hspace{-0.3mm}i_2\hspace{-0.3mm})}\hspace{-0.6mm}||_{\hspace{-0.3mm}2} \leq 0, \ \forall k, \textbf{X},
\end{align}
where $\overline{\|\textbf F_{k} + \textbf C_{k}^H  \textbf A_k \textbf C_{k}\|_F^2} = 2 \text{Tr}\bigl(\mathcal S_{(i_2)}^{H}  \textbf C_{k}^{H}  \triangle\textbf A_k \textbf C_{k} \bigr)
        + 2\text{Tr}\bigl( \mathcal S_{(i_2)}^H  \triangle\textbf F_k \bigr)
       +\|\mathcal S_{(i_2)}\|_F^2$, $x_{n,k}^{\rm aff} = ( x_{n}^{\text{TPA}(i_2)} - x_k)^{ 2}  + 2 ( x_{n}^{\text{TPA}(i_2)}  -  x_k )  ( x_{n}^{\text{TPA}}  -  x_{n}^{\text{TPA}(i_2)} )$,  $z_{n,k}^{\rm aff}  =  \tfrac{1}{z_{n,k}^{(i_2)}} - \tfrac{z_{n,k} - z_{n,k}^{(i_2)}}{(z_{n,k}^{(i_2)})^2}$, $\overline{z_{n,k}^{\rm aff}} = \tfrac{2\pi} {\lambda}\left(z_{n,k}^{(i_2)}\right)^{-\frac{1}{2}} - \frac{\pi}{\lambda} \left(z_{n,k}^{(i_2)}\right)^{-\frac{3}{2}}\left(z- z_{n,k}^{(i_2)}\right)$,  $\overline{\mathcal{S}_{k,i}^{\rm aff}}  =  2\text{Tr}\bigr((\textbf{F}_{k}^{(i_{2})})^{H}\textbf{F}_{k}\bigr)- ||\textbf{C}_{i}^{H}\textbf{A}_{k}^{(i_{2})}\textbf{C}_{i}||_{F}^{2}- ||\textbf{F}_{k}^{(i_{2})}||_{F}^{2}+ 2 \text{Tr}\bigr((\textbf{C}_{i}  \textbf{C}_{i}^{H} \textbf{A}_{k}^{(i_{2})}\textbf{C}_{i}  \textbf{C}_{i}^{H})^{H}\textbf{A}_{k}\bigr)$, $\left\|\textbf{X}^{\rm aff}\right\|_{2}  =  \text{Tr}\left[\bm{\lambda}_{\rm max}(\textbf{X}^{(i_2)}) \bm{\lambda}^H_{\rm max}(\textbf{X}^{(i_2)}) (\textbf{X} - \textbf{X}^{(i_2)}) \right]$, $\mathcal{S}_{(i_2)} = \textbf{F}_k^{(i_2)} + \textbf{C}_k^H \textbf{A}_k^{(i_2)} \textbf{C}_k $, $\bm{\triangle} \textbf{F}_k = \textbf{F}_k - \textbf{F}_k^{(i_2)}$, $\text{Diag}(\textbf{F}_k)_{n} = z_{n,k}$, and  $\bm{\triangle} \textbf{A}_k = \textbf{A}_k - \textbf{A}_k^{(i_2)}$.
Note that $(\widetilde{{\rm C7}}), (\widetilde{{\rm C12a}}), (\widetilde{{\rm C12b}})$,$ (\widetilde{{\rm C13a}}), (\widetilde{\overline{\rm C16}}) \hspace{-0.6mm} \Rightarrow \hspace{-0.7mm}  ({\rm C7}), ({\rm C12a})$, $ ({\rm C12b}), ({\rm C13a}), (\widetilde{{\rm C16}})$, respectively. 
The terms with ``$(i_2)$'' are the solutions obtained in the $i_2$-th MM iteration. The convex constraints $(\widetilde{{\rm C11}})$ and $(\widetilde{{\rm C17a}})$--$(\widetilde{{\rm C18a}})$ can be derived analogously to~\eqref{eq: SCA} and therefore are omitted here.

Considering the challenging periodic terms, we apply the MM technique with a Lipschitz gradient surrogate to establish a valid global upper bound~\cite{mairal2015incremental} for $\Phi \hspace{-0.5mm}\bigr(\hspace{-0.35mm}[\textbf{A}_k]_{l,1}, \hat{\theta}_{l,k}\hspace{-0.5mm}\bigr) \hspace{-0.4mm}, \forall l, k$, as: 
\begin{align}
\overline{\Phi}^{(\hspace{-0.3mm}i_2\hspace{-0.3mm})}_{l,k} 
      & = 2\hspace{-0.4mm}\left(\hspace{-0.1mm}\hat{\Re}_{l,k}^{(\hspace{-0.3mm}i_2\hspace{-0.3mm})} \hspace{-0.2mm}\sin \hat{\theta}_{l,k}^{(\hspace{-0.3mm}i_2\hspace{-0.3mm})} \hspace{-0.8mm}+\hspace{-0.8mm} \hat{\Im}_{l,k}^{(\hspace{-0.3mm}i_2\hspace{-0.3mm})}\hspace{-0.7mm}\cos \hat{\theta}_{l,k}^{(\hspace{-0.3mm}i_2\hspace{-0.3mm})}\hspace{-0.3mm}\right)\hspace{-0.8mm}\triangle \hat{\theta}_{l,k} \nonumber \\
     &+ \frac{L_{\hspace{-0.2mm}\text{AR}} \hspace{-0.2mm} (\hspace{-0.4mm} \triangle \Re_{\hspace{-0.2mm}l\hspace{-0.2mm},\hspace{-0.2mm}k}\hspace{-0.4mm})^2 + L_\text{AI} (\triangle \Im_{l,k})^2 + L_{\rm TH} (\triangle \hat{\theta}_{l,k})^2}{2} \nonumber \\
      &+ 2\hat{\Re}_{l,k}^{(\hspace{-0.3mm}i_2\hspace{-0.3mm})}\hspace{-0.1mm}\triangle \Re_{l,k} \hspace{-0.2mm}+\hspace{-0.2mm}\Phi^{(i_2)} + \hspace{-0.4mm} 2 \hspace{-0.2mm} \hat{\Im}_{l,k}^{\hspace{-0.3mm}(\hspace{-0.3mm}i_2\hspace{-0.3mm})}\hspace{-0.2mm}\triangle \Im_{\hspace{-0.2mm}l\hspace{-0.2mm},k}, \forall k, l
\end{align}
where $\hat{\Re}_{l,k}^{(i_2)} = \Re([\textbf{A}_k]_{l,1}^{(i_2)}) -\hspace{-0.5mm} \cos \hat{\theta}_{l,k}^{(i_2)}$, $\hat{\Im}_{l,k}^{(i_2)} = \Im([\textbf{A}_k]_{l,1}^{(i_2)}) + \sin \hat{\theta}_{l,k}^{(i_2)}$, $\triangle \hat{\theta}_{l,k} = \hat{\theta}_{l,k} - \hat{\theta}_{l,k}^{(i_2)}$, $\triangle \Re_{l,k}=\Re([\textbf{A}_k]_{l,1}) - \Re([\textbf{A}_k]_{l,1}^{(i_2)}) $, $\triangle \Im_{l,k} = \Im([\textbf{A}_k]_{l,1}) - \Im([\textbf{A}_k]_{l,1}^{(i_2)})$, with the Lipschitz constants $L_{\rm AR} = L_{\rm AI} = 2$ and $L_{\rm TH} = 4$ based on the maximum curvature~\cite{mairal2015incremental}. 
The corresponding surrogate for $\Phi\bigl([\textbf{A}_q]_{l,1}, \hat{\theta}_{l,q}\bigl)$ is obtained analogously and is denoted by $\overline{\Phi}^{(\hspace{-0.3mm}i_2\hspace{-0.3mm})}_{l,q}$, $\forall l$.

Accordingly, the $(i_2 + 1)$-th iteration of the penalty-based MM method for this subproblem can be expressed as: 
\begin{align}
\label{eq: formulation2_3}
&\underset{x^{\rm{TPA}}_n, \ \textbf{X}, \, \vartheta,\  \bm{\theta}} {\rm{maximize}} \hspace{2mm}
\frac{\vartheta u^*}{\sigma_{\rm R}^2} - \rho_2 \left[\sum_{k=1}^{K} \sum_{l=2}^{N}
        \overline{\Phi}_{l,k}^{(i_2)} + \sum_{l=2}^{N}
        \overline{\Phi}_{l,q}^{(i_2)}\right] \nonumber \\
\rm{s.t.}\
&({\rm C1}), (\widetilde{\rm C7}), (\widetilde{\rm C11}),(\widetilde{\rm C12a})-(\widetilde{\rm C13a}),({\rm C13b}), ({\rm C15}),\nonumber \\
&(\widetilde{\overline{{\rm C16}}}), (\widetilde{\rm C17a})-(\widetilde{\rm C18a}), ({\rm C18b}), ({\rm C20})-({\rm C22}),
\end{align}
where problem~\eqref{eq: formulation2_3} is convex and yields a high-quality suboptimal solution to problem~\eqref{eq: formulation2}. Moreover, the proposed BCD algorithm is guaranteed to converge to a suboptimal solution of~\eqref{eq: formulation} with polynomial-time computational complexity~\cite{hu2025}.

\section{Simulation Results}
This section evaluates the proposed system via Monte-Carlo simulations with $d=3$~m and $f_c=12$~GHz~\cite{baduge2025frequency}. The user and target locations are independently and uniformly generated in a $20 \times 20$~$\text{m}^2$ area, and all results are averaged over multiple random realizations. Unless stated otherwise, we set $L = 20$ m, $D_n = \tfrac{L}{N}$ m, $\sigma_k = \sigma_{\rm R} = \sigma = -90$ dBm, $\forall k$, $N=8$, $K=3$, $P^{\max}=1$~W, $P^{\max}_n = \tfrac{P^{\max}}{N}$~W, $\forall n$, and $R_{\min,k}^{\rm Req}=R_{\min}=1$~bits/s/Hz, $\forall k$. We compare the proposed design with three baselines: i) \emph{Bench~1} (fixed waveguide mode), which follows the proposed optimization framework but leverages a random and fixed mode split with $N/2$ RWGs and $N/2$ TWGs; ii) \emph{Bench~2} (fixed RPAs), which constrains the RPAs to be located at the BS; and iii) \emph{Bench~3} (fixed antennas), which considers a conventional fixed-antenna array at the BS with optimized beamforming and antenna modes (transmission or reception) design.
\begin{figure}[t]
\centerline{\includegraphics[width=2.5in]{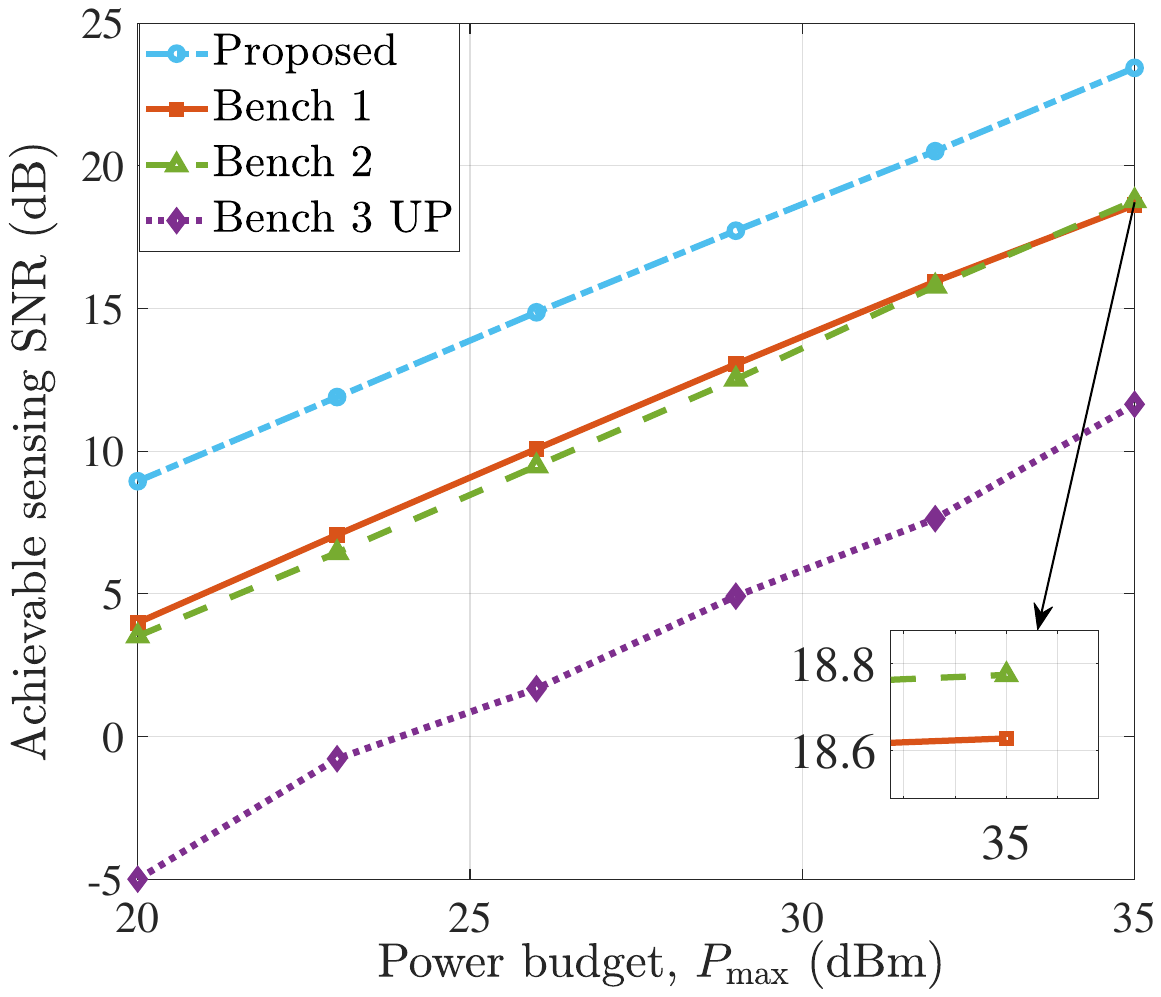}}
\vspace{-1mm}
\caption{Average post-combining sensing SNR versus $P_{\max}$ for different antenna systems.}
\label{fig: powerInc}
\vspace{-1em}
\end{figure}

Fig.~\ref{fig: powerInc} shows that the average round-trip sensing SNR of all schemes increases monotonically with the power budget $P_{\max}$, owing to the additional available energy for ISAC. The proposed design consistently achieves the highest sensing SNR over the entire power range, confirming that joint waveguide-mode selection and RPA positioning provide additional spatial DoF for near-field focusing and effectively mitigate the severe two-way propagation loss. 
Interestingly, \emph{Bench~1} slightly outperforms \emph{Bench~2} at low $P_{\max}$, whereas \emph{Bench~2} becomes marginally superior as $P_{\max}$ increases. This suggests that RPA positioning flexibility governs low-power sensing performance, whereas the fixed RWG/TWG allocation in \emph{Bench~1} becomes the dominant bottleneck at high $P_{\max}$. By contrast, the adaptive mode selection in \emph{Bench~2} coordinates the limited waveguide resources more effectively and hence exploits the additional power more efficiently for both transmission and reception. 
Since \emph{Bench~3} is generally infeasible under the QoS constraints, we instead evaluate its QoS-relaxed upper bound,  \emph{Bench~3 UP}. Nevertheless, its persistent performance gap compared to PA-assisted schemes underscores the inherent inadequacy of fixed-array architectures for near-field ISAC.

\begin{figure}[t]
\centerline{\includegraphics[width=2.5in]{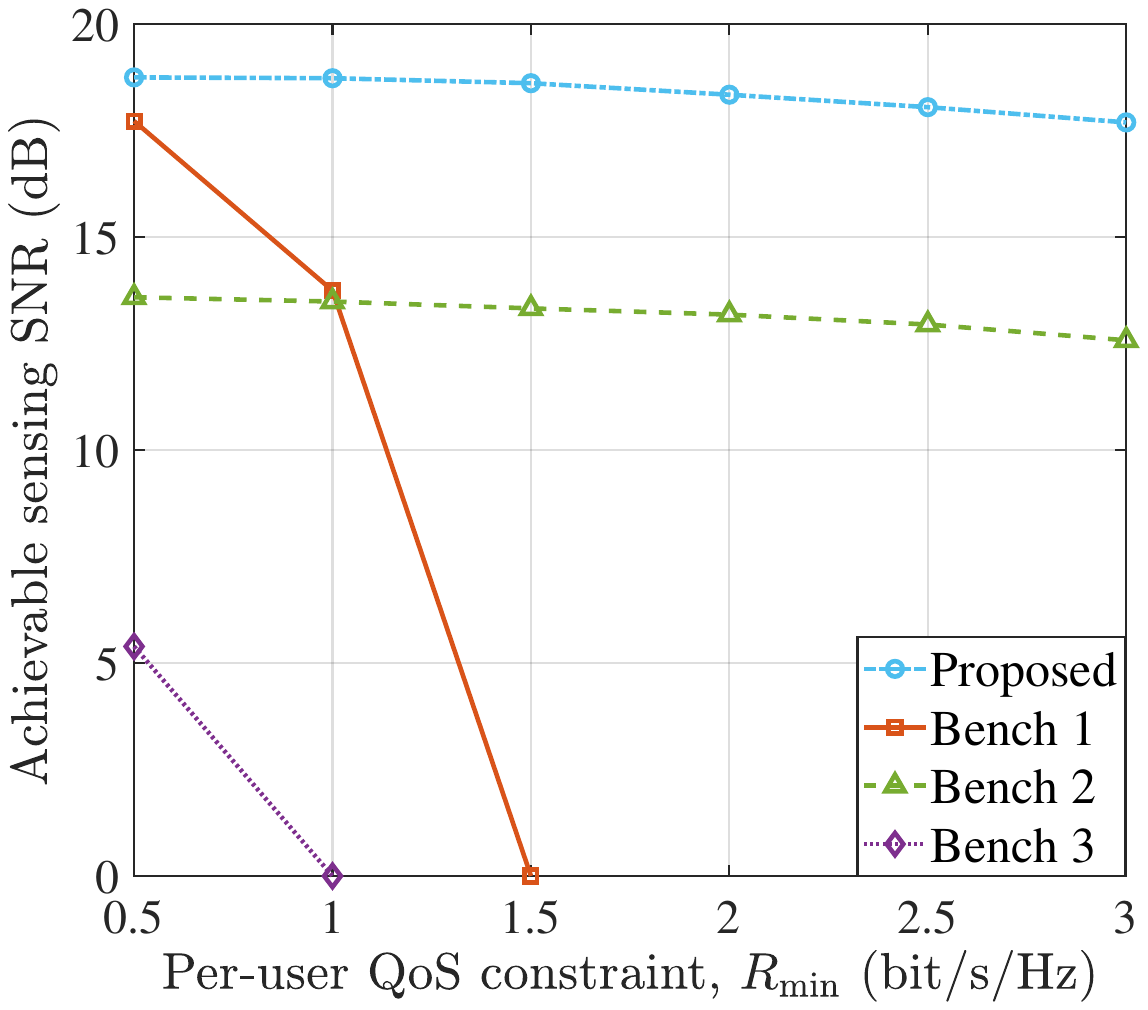}}
\vspace{-1mm}
\caption{Average post-combining sensing SNR versus the per-user QoS requirement $R_{\min}$ of different antenna systems.}
\label{fig: RminInc}
\vspace{-1.5em}
\end{figure}

Fig.~\ref{fig: RminInc} evaluates the average sensing SNR versus the QoS target $R_{\min}$, where infeasible realizations are mapped to zero under the adopted penalty rule. The figure reveals the intrinsic sensing--communication trade-off: as $R_{\min}$ increases, more available transmit power and spatial DoF are forced to be allocated to communication, thereby reducing the flexibility for sensing and hence limiting the sensing SNR. Notably, the proposed design experiences only mild performance degradation and remains feasible over the entire range, highlighting that the proposed joint design framework effectively enlarges the feasible solution set to improve both sensing and communication performance. In contrast, \emph{Bench~2} remains feasible but suffers a persistent SNR loss due to the additional path loss caused by fixed RPA locations. \emph{Bench~1} is competitive at low $R_{\min}$, yet degrades rapidly and eventually becomes infeasible as the fixed RWG/TWG split turns into a DoF bottleneck under stringent QoS constraints, indicating the importance of adaptive mode selection. Indeed, \emph{Bench~3} fails even earlier due to the lack of hardware reconfigurability.

\section{Conclusion}
This work proposed a mode-selectable waveguide-enabled PA-assisted ISAC framework, in which waveguides are adaptively configured for either transmission or echo reception. We maximized the post-combining sensing SNR under per-user QoS constraints by jointly optimizing waveguide modes, transmit beamforming, and the positions of the transmit/receive PAs. A low-complexity BCD method with penalty-based MM was developed to tackle the resulting mixed-integer nonconvex problem. Numerical results demonstrated consistent sensing gains and enhanced robustness to QoS tightening compared with fixed-mode/position baselines. More importantly, the results reveal a fundamental sensing--communication trade-off and demonstrate that the proposed adaptive waveguide-mode selection is key to coordinating the limited resources across transmission and reception, thereby yielding a more favorable balance between sensing robustness and communication performance under stringent QoS requirements.



\begin{thebibliography}{10}
\providecommand{\url}[1]{#1}
\csname url@samestyle\endcsname
\providecommand{\newblock}{\relax}
\providecommand{\bibinfo}[2]{#2}
\providecommand{\BIBentrySTDinterwordspacing}{\spaceskip=0pt\relax}
\providecommand{\BIBentryALTinterwordstretchfactor}{4}
\providecommand{\BIBentryALTinterwordspacing}{\spaceskip=\fontdimen2\font plus
\BIBentryALTinterwordstretchfactor\fontdimen3\font minus \fontdimen4\font\relax}
\providecommand{\BIBforeignlanguage}[2]{{%
\expandafter\ifx\csname l@#1\endcsname\relax
\typeout{** WARNING: IEEEtran.bst: No hyphenation pattern has been}%
\typeout{** loaded for the language `#1'. Using the pattern for}%
\typeout{** the default language instead.}%
\else
\language=\csname l@#1\endcsname
\fi
#2}}
\providecommand{\BIBdecl}{\relax}
\BIBdecl

\bibitem{xu2022robust}
D.~Xu, X.~Yu, D.~W.~K. Ng, A.~Schmeink, and R.~Schober, ``Robust and secure resource allocation for {ISAC} systems: A novel optimization framework for variable-length snapshots,'' \emph{IEEE Trans. Commun.}, vol.~70, no.~12, pp. 8196--8214, 2022.

\bibitem{liu2021cramer}
F.~Liu, Y.-F. Liu, A.~Li, C.~Masouros, and Y.~C. Eldar, ``Cram{\'e}r-{Rao} bound optimization for joint radar-communication beamforming,'' \emph{IEEE Trans. Signal Process.}, vol.~70, pp. 240--253, Dec. 2021.

\bibitem{wu2023movable}
Y.~Wu, D.~Xu, D.~W.~K. Ng, W.~Gerstacker, and R.~Schober, ``Movable antenna-enhanced multiuser communication: Jointly optimal discrete antenna positioning and beamforming,'' in \emph{Proc. IEEE Global Commun. Conf. (Globecom)}, Dec. 2023, pp. 7508--7513.

\bibitem{conference}
R.~Zhao, S.~Hu, D.~Mishra, and D.~W.~K. Ng, ``Resource allocation for multi-waveguide pinching antenna-assisted broadcast networks,'' in \emph{IEEE Global Commun. Conf. (Globecom) Wkshps.}, Dec. 2025, pp. 1--7.

\bibitem{guo2025learning}
J.~Guo, Y.~Liu, and A.~Nallanathan, ``Learning beamforming for pinching antenna system-enabled {ISAC} in low-altitude wireless networks,'' \emph{arXiv preprint arXiv:2512.04293}, 2025.

\bibitem{11197530}
Z.~Zhang, Z.~Wang, X.~Mu, B.~He, J.~Chen, and Y.~Liu, ``Integrated sensing and communications for pinching-antenna systems {(PASS)},'' \emph{IEEE Commun. Lett.}, vol.~29, no.~12, pp. 2929--2933, Dec. 2025.

\bibitem{qin2025joint}
Y.~Qin, Y.~Fu, and H.~Zhang, ``Joint antenna position and transmit power optimization for pinching antenna-assisted {ISAC} systems,'' \emph{IEEE Wireless Commun. Lett.}, vol.~14, no.~11, pp. 3535--3539, Nov. 2025.

\bibitem{mao2025multi}
W.~Mao, Y.~Lu, Y.~Xu, B.~Ai, O.~A. Dobre, and D.~Niyato, ``Multi-waveguide pinching antennas for {ISAC},'' \emph{IEEE Trans. Wireless Commun.}, vol.~25, pp. 5846--5858, Oct. 2025.

\bibitem{li2025pinching}
H.~Li, R.~Zhong, J.~Lei, and Y.~Liu, ``Pinching antenna system for integrated sensing and communications,'' \emph{IEEE Trans. wirelss Commun.}, vol.~25, pp. 13\,416--13\,429, Mar. 2026.

\bibitem{blockage}
R.~Zhao, S.~Hu, D.~Mishra, and D.~W.~K. Ng, ``Robust and secure blockage-aware pinching antenna-assisted wireless communication,'' \emph{IEEE Trans. Mob. Comput.}, pp. 1--18, early access, May 22, 2026, doi: 10.1109/TMC.2026.3695952.

\bibitem{hu2025}
S.~Hu, R.~Zhao, Y.~Liao, D.~W.~K. Ng, and J.~Yuan, ``Sum-rate maximization for pinching antenna-assisted {NOMA} systems with multiple dielectric waveguides,'' in \emph{IEEE Global Commun. Conf. (Globecom) Wkshps.}, Dec. 2025, pp. 1--7.

\bibitem{attiah2024beamforming}
K.~M. Attiah and W.~Yu, ``Beamforming design for integrated sensing and communications using uplink-downlink duality,'' in \emph{Proc. IEEE Int. Symp. Inf. Theory (ISIT)}, Jun. 2024, pp. 2808--2813.

\bibitem{mairal2015incremental}
J.~Mairal, ``Incremental majorization-minimization optimization with application to large-scale machine learning,'' \emph{SIAM J. Optim.}, vol.~25, no.~2, pp. 829--855, 2015.

\bibitem{baduge2025frequency}
G.~A. Baduge, M.~Vaezi, J.~K. Dassanayake, M.~Z. Hameed, E.~Ollila, and S.~A. Vorobyov, ``Frequency range 3 for {ISAC} in {6G}: Potentials and challenges,'' \emph{IEEE Commun. Mag.}, early access, May 13, 2026, doi: 10.1109/MCOM.001.2500382.

\end{thebibliography}
\end{document}